\newcommand{\be}{\begin{equation}}
\newcommand{\ee}{\end{equation}}
\newcommand{\bea}{\begin{eqnarray}}
\newcommand{\eea}{\end{eqnarray}}
\newcommand{\ben}{\begin{enumerate}}
	\newcommand{\een}{\end{enumerate}}
\newcommand{\beseq}{\begin{subequations}}
	\newcommand{\eeseq}{\end{subequations}}
\newlength\figureheight
\newlength\figurewidth
\newtheorem{lem}{Lemma}
\begin{document}

\title{On Distributed Power Control for Uncoordinated Dual Energy Harvesting Links: Performance Bounds and Near-Optimal Policies
\thanks{Mohit K. Sharma and Chandra R. Murthy are with the Dept. of ECE, Indian Institute of Science, Bangalore 560012, India (e-mails: \{mohit, cmurthy\}@ece.iisc.ernet.in).}
		\thanks{Rahul Vaze is with the School of Technology and Computer Science, Tata
	Institute of Fundamental Research, Mumbai 400005, India (e-mail: vaze@tcs.tifr.res.in).}
\thanks{This work was financially supported by a research grant from the Aerospace Network Research Consortium.}}

\author{\IEEEauthorblockN{Mohit K. Sharma, Chandra R. Murthy, \emph{Senior Member, IEEE}, and Rahul Vaze, \emph{Member, IEEE}}}

\maketitle

\begin{abstract}
In this paper, we consider a point-to-point link between an energy harvesting transmitter and receiver, where neither node has the information about the battery state or energy availability at the other node. 
We consider a model where data is successfully delivered only in slots where both nodes are \emph{active}. Energy loss occurs whenever one node turns on while the other node is in sleep mode. In each slot, based on their own energy availability, the transmitter and receiver need to independently decide whether or not to turn on, with the aim of maximizing the long-term \textcolor{black}{time-average} throughput. We  present an upper bound on the throughput achievable by analyzing a genie-aided system that has noncausal knowledge of the energy arrivals \textcolor{black}{at both the nodes}. Next, we propose an online policy requiring an occasional one-bit feedback \textcolor{black}{whose throughput is} within one bit of the upper bound, asymptotically in the battery size.  In order to further reduce the feedback required, we propose a time-dilated version of the online policy. As the time dilation gets large, this policy does not require any feedback and achieves the upper bound \textcolor{black}{asymptotically in the battery size}. Inspired by this, we also propose a near-optimal fully uncoordinated policy. We use Monte Carlo simulations to validate our theoretical results and illustrate the performance of the proposed policies. 

\begin{keywords}
Throughput, energy harvesting, decoding cost, distributed power control.
\end{keywords}
\end{abstract}
\newcounter{MYtempeqncnt}

\section{Introduction}
Fifth generation wireless networks aim to  provide connectivity to massive numbers of low power sensors, deployed to collect the data for monitoring and surveillance. Often, these applications require the sensors to be deployed at places that are not easily accessible. Thus, the lifetime of a sensor is limited by the battery attached to it. The energy harvesting (EH) technology circumvents this problem, as an EH sensor can harvest the energy from the environment, e.g., from a solar source \cite{Ulukus_JSAC_Mar2015, Yuen_IEEEcomMag_Apr2015}. On the other hand, the design of EH communication systems becomes more challenging because of the randomness in the energy arrivals. Specifically, \textcolor{black}{due to the finiteness of the energy storage (battery), it is necessary to design power management policies that strike the right balance between ensuring energy availability when required and avoiding loss of harvested energy when the battery is full.}

The design of power management policies for point-to-point links where both the transmitter and receiver are energy harvesting nodes (EHNs), termed as \emph{dual} EH links, is undertaken in \cite{RVaze_TIT_Aug16, Arafa_JSAC_Dec2015, Doshi_ICCS_Nov2014, Zhou_JSAC_Mar2015, Ayadav_ICC2015} with the objective of optimizing the transmission time \cite{RVaze_TIT_Aug16}, throughput \cite{Arafa_JSAC_Dec2015, Doshi_ICCS_Nov2014}, delay-limited throughput \cite{Zhou_JSAC_Mar2015} and packet drop probability \cite{Ayadav_ICC2015}. However, \textcolor{black}{most of the work on dual EH} considers noncausal knowledge of energy arrivals at both the transmitter and receiver, and the issue of coordination between the transmitter and receiver \textcolor{black}{is neglected for  ease of analysis}. 
Dual EH links constitute a basic building block for wireless sensor networks where all nodes are EHNs. Hence, the design of online policies for energy management in dual EH links is  important.

In this paper, we consider a dual EH link between an EH transmitter and \textcolor{black}{an EH} receiver, communicating over a  Gaussian channel. Neither node has information about the battery state or energy arrivals of the other node. Hence, both the transmitter and the receiver remain uncertain if the other node will be `\emph{on}' in the next slot. Data is delivered successfully if and only if both the transmitter and the receiver are simultaneously `on' in \textcolor{black}{any} slot. 
It is worth noting that perfect coordination between the transmitter and receiver can be achieved if the receiver is allowed to send an acknowledgment for a successfully received packet~\cite{Mohit_TWCJune16_submitted}. However, for a link with \textcolor{black}{an AWGN} channel, \textcolor{black}{since the success or failure of an attempt can be inferred from the rate of transmission}, sending an acknowledgment signal is not necessary and is an additional overhead for energy-starved sensors. The goal of this work is to design a distributed and online power control policy for both the nodes, \textcolor{black}{to} maximize the \textcolor{black}{long-term} time-averaged throughput with  minimal feedback about the battery state at the other node. Our main contributions are as follows:
\begin{enumerate}
	\item First, we derive an upper bound on the maximum achievable throughput, by analyzing a system that has non-causal knowledge of energy arrivals.
	\item Next, we present an online, distributed power control policy \textcolor{black}{whose throughput is within one bit of} the upper bound, and requires an occasional one bit feedback. 
	\item In order to further reduce the amount of feedback to achieve the upper bound, we propose a time-dilated policy \textcolor{black}{which achieves the upper bound and requires zero feedback, in the limit as the time-dilation gets large.}
	\item We also propose a near-optimal, deterministic, fully uncoordinated policy which requires no feedback \textcolor{black}{about the battery state of the other node, and analytically characterize its gap from the optimality.}  
	\item Our simulation results confirm the theoretical findings and illustrate the trade-offs between system parameters. 
\end{enumerate}

The policies presented here not only achieve \textcolor{black}{rates close to} the upper bound but are also simple to implement. \textcolor{black}{Furthermore, the proposed policy requires knowledge about only first and second order statistics of the harvesting processes.} Our policies allow the nodes to operate in a truly uncoordinated fashion, and nearly obviates the need for any feedback.
Also, in contrast to \cite{Doshi_ICCS_Nov2014}, where the upper bound is obtained by assuming unit-sized batteries at the nodes, the upper bound presented here is independent of the battery capacity.  

\section{System Model}
We consider a point-to-point link where an energy harvesting node (EHN) \textcolor{black}{needs to} transmit data to another EHN over a Gaussian channel. The harvesting process at the transmitter and receiver are assumed to be stationary and ergodic random processes with their mean harvesting rates denoted by $\mu_t$ and $\mu_r$, respectively. 
The energy harvested at the transmitter and receiver in the $n^{\text{th}}$ slot is denoted by $\mathcal{E}_t(n)$ and $\mathcal{E}_r(n)$, respectively. \textcolor{black}{At both the nodes, the harvested energy is stored in a perfectly efficient, finite capacity battery}. Since the amount of energy harvested  is random, both the transmitter and  receiver do not know the \textcolor{black}{exact battery state} at their counterpart. 
Hence, in any slot, the transmitter does not know if the receiver will be `on' to receive the data or not, and vice-versa. 
A data packet is successfully delivered if and only if  both the transmitter and receiver are simultaneously \textit{on} in a slot. 

The power control policy at the transmitter and receiver over an $N$ slot horizon is denoted by $\mathcal{P}_t=\{p_t(n)\}_{n=1}^N$ and $\mathcal{P}_r=\{p_r(n)\}_{n=1}^N$, respectively, where $p_t(n)$ and $p_r(n)$ denote the energy used by the transmitter and receiver, respectively, in the $n^{\text{th}}$ slot. The power control policy at the receiver, $p_r(n)$, is binary valued, i.e., if it decides to turn \textit{on} in a slot, it always consumes $R$ units of energy; and it does not incur any energy cost in the sleep mode\cite{MSharma_JSAC_Dec2016, RVaze_TIT_Aug16, Ayadav_ICC2015, Zhou_JSAC_Mar2015}. On the other hand, the transmit power control policy is continuous valued. Without loss of generality, we assume that each slot is of unit duration; hence, we use the terms power and energy interchangeably. By the principle of energy conservation, the battery at the transmitter evolves as 
\begin{equation}
B_{n+1}^t= \min\left\{\max\{0,B_n^t+\mathcal{E}_t(n) -p_t(n)\},B_{\max}^t\right\}.
\label{eq:Tax_battery_evol}
\end{equation}
In the above, $B_n^t$ denotes the battery at the transmitter at the start of the $n^{\text{th}}$ slot, and $B_{\max}^t< \infty$ denotes the size of the battery at the transmitter. The battery at the receiver is of size $B_{\max}^r$, and its state $B_n^r$ evolves in a similar fashion. \textcolor{black}{We also consider the use of  super-capacitor at the transmitter, for temporarily holding energy budgeted for transmission. Its role in the operation of the transmitter will be elaborated on later.}

We assume that the rate achieved corresponding to power $p_t(n)$ is well approximated by the capacity expression, i.e., $\mathcal{R}(p_t(n))\triangleq\log(1+p_t(n))$\textcolor{black}{\cite{Dong_JSAC_Mar2015, Xu_JSAC_Feb2014}, which is an upper bound on the actual rate}.  For simplicity, we assume that the power spectral density of the additive white Gaussian noise at the receiver is unity. Our aim in this paper is to devise a distributed power control strategy for the transmitter and the receiver, i.e., $\mathcal{P}_t$ and $\mathcal{P}_r$, such that the \textcolor{black}{long-term} time-averaged throughput is maximized. That is, our goal is to maximize 
\begin{equation}
\mathcal{T}\triangleq\frac{1}{N}\sum_{n=1}^N\mathbbm{1}_{\{p_r(n)\neq 0\}}\log(1+p_t(n)).
\label{eq:time_average_throughput}
\end{equation}
In the above, $\mathbbm{1}_{\{p_r(n)\neq 0\}}$ is an indicator function which takes value one if $p_r(n)$ is nonzero, otherwise it is equal to zero. Thus, $\mathbbm{1}_{\{p_r(n)\neq 0\}}\log(1+p_t(n))$ denotes the rate achieved in the $n^{\text{th}}$ slot, which is nonzero if and only if  both the EHNs are on in the $n^{\text{th}}$ slot, i.e., $p_t(n)\neq 0$ and $p_r(n)\neq 0$. 

Mathematically, the problem of maximizing the long-term time-averaged throughput can be written as
\begin{subequations}
\begin{align}
\label{eq:Opt_problem_obj}
\qquad\qquad\max_{\{p_t(n), p_r(n),n\geq 1\}} \liminf_{N\to\infty}\mathcal{T}&\\
\hspace{-2cm}\text{subject to: }  0\leq p_t(n)\leq B_t^n,&\\
 p_r(n)\in\{0,R\},  \text{ and }p_r(n)\leq B_r^n.& \label{eq:rxconstr}
\end{align}
\label{eq:opt_problem_form}  
\end{subequations}
In \eqref{eq:rxconstr}, the constraint $p_r(n)\in\{0,R\}$ denotes the fact that the power control policy at the receiver is binary-valued, i.e., it consumes $0$ or $R$ units of energy, depending on whether it is off or on, respectively. \textcolor{black}{Note that, for a given sample path of the
harvesting processes and deterministic policies conditioned on the sample path, $\liminf_{N\to\infty} \mathcal{T}$ is a well defined
deterministic quantity.} We seek to obtain the power control policy for the transmitter and receiver such that they can operate without requiring knowledge of each other's battery state, while achieving near-optimal performance. First, in order to benchmark the performance of any policy, we derive an upper bound on the throughput in~\eqref{eq:opt_problem_form}.

\section{Upper Bound on the Throughput}
\label{Sec:upper_bound}
In this section, we derive an upper bound on the achievable long-term time-averaged throughput by considering a system in which both the EHNs are equipped with infinite size batteries, and have noncausal information about the energy arrivals. The following Lemma provides the upper bounds. 
\begin{lem}
The long-term time-averaged throughput of a dual EH link satisfies:
\begin{enumerate}[a)]
	\item $\liminf_{N\to\infty}\mathcal{T} \leq \log (1+\mu_t)$ \text{ if } $\frac{\mu_r}{R}> 1$,
	\item $\liminf_{N\to\infty}\mathcal{T} \leq \left(\frac{\mu_r}{R}\right)\log \left(1+\frac{R\mu_t}{\mu_r}\right)$ \text{ if } $\frac{\mu_r}{R}\leq 1$.
\end{enumerate}
\label{lem:throughput_upperbound}
\end{lem}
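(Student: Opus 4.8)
The plan is to show that any causal, finite-battery policy is dominated by a relaxed ``genie-aided'' system that retains only the long-term average energy constraints, and then to solve the resulting deterministic optimization in closed form. First I would observe that for any feasible policy the cumulative energy drawn at each node cannot exceed the initial battery charge plus the cumulative harvest, so that $\sum_{n=1}^N p_t(n) \le B_1^t + \sum_{n=1}^N \mathcal{E}_t(n)$ and $\sum_{n=1}^N p_r(n) \le B_1^r + \sum_{n=1}^N \mathcal{E}_r(n)$. Dividing by $N$, using finiteness of the initial charge (so $B_1^t/N, B_1^r/N \to 0$), and invoking stationarity and ergodicity of the harvesting processes (whence $\frac{1}{N}\sum_n \mathcal{E}_t(n) \to \mu_t$ and $\frac{1}{N}\sum_n \mathcal{E}_r(n)\to\mu_r$ by the ergodic theorem), I obtain the two average constraints $\frac{1}{N}\sum_n p_t(n) \le \mu_t + o(1)$ and $\frac{1}{N}\sum_n p_r(n)\le \mu_r + o(1)$. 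Since $p_r(n)\in\{0,R\}$, the second constraint bounds the fraction of active slots: writing $\alpha_N \triangleq \frac{1}{N}|\{n: p_r(n)=R\}|$, I get $R\,\alpha_N \le \mu_r + o(1)$, i.e. $\alpha_N \le \mu_r/R + o(1)$, and trivially $\alpha_N\le 1$.

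Next I would reduce the throughput to a single-variable problem. Because a receiver-\emph{off} slot contributes nothing to the objective in \eqref{eq:time_average_throughput}, any transmit energy spent outside the active set $\mathcal{A}=\{n: p_r(n)=R\}$ is wasted, so I may assume all transmit energy is concentrated on $\mathcal{A}$, giving $\sum_{n\in\mathcal{A}} p_t(n) \le \sum_n p_t(n)$. Applying Jensen's inequality to the concave map $\log(1+\cdot)$ over the $|\mathcal{A}|=N\alpha_N$ active slots yields
\begin{equation}
\mathcal{T} = \frac{1}{N}\sum_{n\in\mathcal{A}}\log(1+p_t(n)) \le \alpha_N \log\!\left(1+\frac{\frac{1}{N}\sum_{n\in\mathcal{A}}p_t(n)}{\alpha_N}\right) \le \alpha_N\log\!\left(1+\frac{\mu_t}{\alpha_N}\right)+o(1),
\end{equation}
where the last step uses the transmit-energy bound. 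Thus, with $f(\alpha)\triangleq \alpha\log(1+\mu_t/\alpha)$, the problem reduces to maximizing $f$ over $\alpha\in(0,\min\{1,\mu_r/R\}]$.

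Finally I would establish that $f$ is strictly increasing, placing the optimum at the boundary $\alpha^\star = \min\{1,\mu_r/R\}$. Differentiating and substituting $x=\mu_t/\alpha>0$ gives $f'(\alpha)=\log(1+x)-\frac{x}{1+x}$, which is strictly positive for all $x>0$ by the elementary inequality $\log(1+x)>\frac{x}{1+x}$. Hence $\mathcal{T}\le f(\alpha^\star)+o(1)$, and taking $\liminf_{N\to\infty}$ makes the $o(1)$ terms vanish. Evaluating the two regimes recovers the claimed bounds: when $\mu_r/R>1$ the cap is $\alpha^\star=1$, giving $\log(1+\mu_t)$; when $\mu_r/R\le1$ the cap is $\alpha^\star=\mu_r/R$, giving $\frac{\mu_r}{R}\log(1+R\mu_t/\mu_r)$.

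The part I expect to require the most care is the interchange of the per-$N$ inequalities with the limit: since $\alpha_N$ may fluctuate along the horizon, I would justify the bound $f(\alpha_N)\le f(\alpha^\star)$ for each $N$ using the monotonicity of $f$ together with $\alpha_N\le\alpha^\star+o(1)$ (and noting $f(\alpha_N)\to 0$ harmlessly if $\alpha_N\to 0$), and then pass to the $\liminf$, possibly along a subsequence attaining it. The ergodic-theorem step likewise deserves attention to ensure the harvest time-averages converge almost surely, consistent with the deterministic per-sample-path quantity noted after \eqref{eq:opt_problem_form}.
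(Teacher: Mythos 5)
Your proof is correct. It differs from the paper's in organization rather than in substance, but the differences are worth noting. The paper treats the two cases separately: for case (a) it simply drops the receiver indicator and applies Jensen's inequality over all $N$ slots, followed by energy conservation and ergodicity; for case (b) it constructs a genie-aided system in which both nodes have infinite batteries and receive their entire $N$-slot harvest noncausally in the first slot, so the receiver can be on for $N'=\lfloor N\mu_r/R\rfloor$ slots and equal power allocation over those slots (by concavity of the log) yields the bound. You instead relax energy causality directly to the time-average constraints $\frac{1}{N}\sum_n p_t(n)\le \mu_t+o(1)$ and $R\,\alpha_N\le \mu_r+o(1)$, reduce everything to the single-variable problem of maximizing $f(\alpha)=\alpha\log(1+\mu_t/\alpha)$ over $\alpha\le\min\{1,\mu_r/R\}$, and prove $f$ is strictly increasing via $\log(1+x)>x/(1+x)$, so both cases fall out of one boundary evaluation. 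Your route buys three things: a unified treatment of the two regimes; an explicit justification of the boundary optimality that the paper leaves implicit (the paper's claim that the genie receiver should use all $N'$ slots with equal allocation is handled there only tacitly, by padding with zero-power slots before applying Jensen, whereas your $f'>0$ makes it transparent); and no auxiliary system, since you carry the per-sample-path $o(1)$ corrections through the original system, consistent with the deterministic liminf noted after \eqref{eq:opt_problem_form}. The paper's route buys brevity and operational intuition --- noncausal knowledge eliminates coordination loss --- and its genie system doubles as the conceptual benchmark against which the subsequent near-optimal policies are measured.
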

\begin{proof}
See Appendix~\ref{App:proof_upper_bound}.
\end{proof}

In the above Lemma, the first scenario \emph{(a)} corresponds to the setting where the average harvesting rate at the receiver exceeds $R$, the energy consumed by it per slot when it is \textit{on}. Thus, the battery state at the receiver has a positive drift even if it remains on in all slots, i.e., the receiver is energy unconstrained. \textcolor{black}{This case is equivalent to having only the transmitter as an EHN.} Case \emph{(b)} corresponds to a scenario when the receiver is energy-constrained, i.e., the average energy harvested in a slot is less than the energy consumed in one slot. \textcolor{black}{Consequently,} the receiver can only turn on intermittently. To avoid loss of energy, the transmitter must avoid sending data when the receiver is off. However, this requires the transmitter to know the state of the battery at the receiver. In the next section, we present near-optimal policies for both the scenarios. 

  
\section{\textcolor{black}{Asymptotically} Optimal Policies}     
\label{Sec:opt_policy_1_bit_fb}
\textcolor{black}{In the following, we first consider Case \emph{(a)} and present a policy which asymptotically achieves the upper bound given in Lemma 1, and does not require any feedback about the battery state at the receiver.}
\subsection{Energy Unconstrained Receiver, $\frac{\mu_r}{R}> 1$}
\textcolor{black}{First, when the battery state has a positive drift, it is known that the probability that the receiver does not have sufficient energy to turn on decays exponentially with the size of the battery\cite{Mohit_TWCJune16_submitted}. Consequently, with high probability, the receiver can always remain on, making this case equivalent to the scenario where only the transmitter is EH. The optimal policy in this scenario, denoted by $\mathcal{P}^u$, is the same as the one proposed in \cite{rahulsri_IEEEACM_netw_aug2013}, which is as follows:}
\begin{equation}
p^u(n)=\begin{cases}
\mu_t+\delta_t^+, \qquad &B_n^t\geq \frac{B_{max}^t}{2},\\	
\min\{B_n^t, \mu_t-\delta_t^-\}, \qquad &B_n^t< \frac{B_{max}^t}{2},\\
\end{cases}
\label{eq:opt_policy_uncons_scen}
\end{equation}
where  $\delta_t^+ =\delta_t^-=\beta_t\sigma_t^2\frac{\log B_{\max}^t}{B_{\max}^t}$. Here, $\sigma_t^2$ denotes the asymptotic variance of the harvesting process at the transmitter, and $\beta_t\geq 2$ is a constant. It is shown in \cite{rahulsri_IEEEACM_netw_aug2013} that the policy $\mathcal{P}^u\triangleq\{p^u(n)\}_{n=1}^N$ converges to the optimal utility   at the rate $\Theta\left(\left(\frac{\log B_{\max}^t}{B_{\max}^t}\right)^2\right)$ while the transmitter battery discharge probability simultaneously goes to zero at the rate $\Theta\left({B_{\max}^t}^{-\beta_t}\right)$. The transmitter battery discharge probability is defined as $p_d^t\triangleq\lim_{N\to\infty}\frac{1}{N}\sum_{n=1}^N\mathbbm{1}_{\{B_n^t= 0\}}$. 
The receiver battery discharge probability is defined similarly. Thus, in the scenario described in the Case \emph{(a)}, $\mathcal{P}^u$ is  \textcolor{black}{asymptotically} optimal as the battery size gets large. 

\textcolor{black}{Next, we present a policy which achieves the throughput within one bit of the upper bound for Case~\emph{(b)},
with an occasional one bit feedback about the receiver's battery state.}
\subsection{Energy Constrained Receiver, $\frac{\mu_r}{R}<1$}
In this section, we present a policy that requires \emph{occasional} one bit feedback. Qualitatively, the policy operates as follows. The receiver sends a one bit feedback whenever the battery level crosses the half-full mark. We assume that the feedback is received without error and delay, and ignore the energy and time overhead in sending it. The one bit feedback enables the transmitter to track whether the receiver's battery is more than or less than half full. Further, the receiver executes a deterministic policy in either half of the battery state; and the transmitter follows the receiver's policy and transmits only in slots where the receiver is also on. In the slots when the receiver is off, the transmitter accumulates the energy prescribed by its own policy in a super capacitor and uses the accumulated energy for transmission in the next slot when the receiver turns on. \textcolor{black}{The consequence of using a super capacitor to temporarily store energy is that the battery energy discharge at the transmitter depends only on its own battery state; specifically, it is independent of the policy at the receiver.}

We now describe the policy in mathematical terms. The energy accumulated in the super capacitor by the end of $n^{\text{th}}$ slot, given that the transmitter does \emph{not} transmit in that slot, is given by 
\be
\mathcal{C}_e(n) =\begin{cases}
	\mathcal{C}_e(n-1)+\mu_t+\delta_t^+, & \text{if }  B_{n}^t\geq \frac{B_{\max}^t}{2}, \\
		\mathcal{C}_e(n-1)+\min{\{\mu_t-\delta_t^-,B_n^t\}}, & \text{if }  B_{n}^t< \frac{B_{\max}^t}{2},
\end{cases}
\ee
and $\mathcal{C}_e(n)=0$, if data is transmitted in the $n^{\text{th}}$ slot, and $\delta_t^+=\delta_t^-=\beta_t\sigma_t^2\frac{\log B_{\max}^t}{B_{\max}^t}$. Here, we assume that the capacity of the super capacitor is sufficient to store the energy accumulated between two consecutive data transmissions. Let $\mathbbm{1}_{\mathcal{R}^+}$ denote an indicator function which takes the value one if $B_n^r \ge \frac{B_{\max}^r}{2}$ and zero otherwise. Also, let $N_r^+\triangleq\lfloor\frac{R}{\mu_r}\rfloor$, and $N_r^-\triangleq\lceil\frac{R}{\mu_r}\rceil$. In the $n^{\text{th}}$ slot, the transmitter follows the policy $\mathcal{P}_t^c$ given by
\be
p^c_t(n)=\begin{cases}
	\mathcal{C}_e(n-1)+\mu_t+\delta_t^+, & \hspace{-0.5in}\text{if } B_n^t\geq \frac{B_{\max}^t}{2},\\&\hspace{-1.4in} n= N_{\text{on}}+N_r^+\mathbbm{1}_{\mathcal{R}^+}+N_r^-(1-\mathbbm{1}_{\mathcal{R}^+}),\\
	\mathcal{C}_e(n-1)+\min{\{\mu_t-\delta_t^-,B_n^t\}},& B_n^t< \frac{B_{\max}^t}{2}, \\&\hspace{-1.4in} n= N_{\text{on}}+N_r^+\mathbbm{1}_{\mathcal{R}^+}+N_r^-(1-\mathbbm{1}_{\mathcal{R}^+}),\\
	0 &\text{otherwise}.
\end{cases}
\label{eq:opt_Policy_cons_tx}
\ee
In the above, $N_{\text{on}}$ denotes the previous slot when the transmitter and receiver were scheduled to turn on. It is initialized to zero at the first slot ($N_{\text{on}} = 0$ when $n=0$), and at any slot index $n$ satisfying $n = N_{\text{on}}+N_r^+\mathbbm{1}_{\mathcal{R}^+}+N_r^-(1-\mathbbm{1}_{\mathcal{R}^+})$, the transmitter and receiver make an attempt if they have energy, and $N_{\text{on}}$ is set to $N_{\text{on}} = n$, i.e., it is updated to the current slot index. 
The policy \eqref{eq:opt_Policy_cons_tx} is derived using the policy given in \eqref{eq:opt_policy_uncons_scen}, i.e., in each slot, the transmitter computes the energy prescribed by $p^u(n)$ for that slot, and transfers the energy from the battery to the super capacitor. In a slot when the receiver is on, the transmitter uses all the energy accumulated in the super capacitor till that slot to transmit its data.

The policy at the receiver is given as
\be
p^c_r(n)=\begin{cases}
	R,\quad & B_n^r\geq \frac{B_{\max}^r}{2}, n= N_{\text{on}}+N_r^+\\
	R,& R\leq B_n^r<\frac{B_{\max}^r}{2},  n= N_{\text{on}}+N_r^- ,\\
	0 & \text{otherwise}. 
\end{cases}
\label{eq:opt_Policy_cons_rx}
\ee
The receiver's policy $\mathcal{P}^c_r\triangleq\{p_r^c(n)\}_{n=1}^N$ also emulates the policy  $\mathcal{P}^u$ given in \eqref{eq:opt_policy_uncons_scen}. Specifically, the receiver executes a policy similar to $\mathcal{P}^u$ by turning on after $N_r^+\triangleq\lfloor\frac{R}{\mu_r}\rfloor$ slots (resulting in a negative drift in the battery state) if battery is more than half full, otherwise it turns on after $N_r^-\triangleq\lceil\frac{R}{\mu_r}\rceil$ slots (resulting in a positive drift in the battery state). 



In the discussion to follow, let $\mathcal{P}^c$ denote the joint power management policy proposed above, i.e., $\mathcal{P}^c_t\triangleq\{p_t^c(n)\}_{n=1}^N$ and $\mathcal{P}^c_r$ given by \eqref{eq:opt_Policy_cons_tx} and \eqref{eq:opt_Policy_cons_rx}, respectively. The following Lemma asserts that the throughput achieved by the policy $\mathcal{P}^c$ is within 1 bit of the upper bound, when the battery capacity is large. In addition, the probability of battery discharge decays polynomially with the battery size at the transmitter, and it decays exponentially fast with the battery size at the receiver.

\begin{lem} \textcolor{black}{Let $\mathcal{T}^c$ denote the time-average throughput achieved by the policy $\mathcal{P}^c$. For policy $\mathcal{P}^c$, the battery discharge probability at the transmitter and receiver are $p_d^t=\Theta\left({B_{\max}^t}^{-\beta_t}\right)$ and $p_d^r=\Theta\left(\exp\left(-\frac{B_{\max}^r\mu_r\delta_r^-}{\sigma_r^2}\right)\right)$, respectively, where $\beta_t\geq 2$ and $\delta_r^-\triangleq N_r^--N_r$, with $N_r\triangleq\frac{R}{\mu_r}$. In addition,  $\left(\frac{\mu_r}{R}\right) \log \left(1+\frac{R\mu_t}{\mu_r}\right)-\mathcal{T}^c-1 = O\left(\frac{\log B_{\max}^t}{B_{\max}^t}\right).$}	\label{lem:policy_perf_guarantee_uncons}
\end{lem}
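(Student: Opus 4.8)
The plan is to prove the three assertions separately, exploiting the fact that the super-capacitor decouples the transmitter's battery from the receiver's schedule and that both $\mathcal{P}_t^c$ and $\mathcal{P}_r^c$ are drift-based policies amenable to the analysis of \cite{rahulsri_IEEEACM_netw_aug2013}. For the transmitter discharge probability, I would first observe that under $\mathcal{P}^c$ the transmitter removes \emph{exactly} $p^u(n)$ from its battery in every slot: the prescribed energy is moved into the super-capacitor each slot and only later released during a transmit slot, so the battery recursion \eqref{eq:Tax_battery_evol} is pathwise identical to the one induced by $\mathcal{P}^u$ in the single-EH-node problem, regardless of whether or when a transmission actually occurs. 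Since $p_d^t$ is a functional of the battery sample path alone, the scaling $p_d^t=\Theta\bigl({B_{\max}^t}^{-\beta_t}\bigr)$ then follows verbatim from the result for $\mathcal{P}^u$ in \cite{rahulsri_IEEEACM_netw_aug2013}.

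For the receiver, I would model the battery sampled at the attempt instants as a random walk with state-dependent drift: over a cycle the receiver harvests $N_r^{\pm}\mu_r$ and spends $R=N_r\mu_r$, giving per-cycle drift $-(N_r-N_r^+)\mu_r<0$ above the half-full mark and $+(N_r^--N_r)\mu_r=+\mu_r\delta_r^->0$ below it. This makes the battery a positive-recurrent reflected process whose only route to discharge is a downward excursion from $B_{\max}^r/2$ to $0$ against a positive drift. Applying a Cramér/Chernoff bound to the partial sums of the stationary ergodic harvest process (equivalently, the Lundberg estimate for the stationary tail of a reflected walk), the probability of such an excursion of height $B_{\max}^r/2$ decays with exponent equal to the Lundberg root $\theta^\star\approx 2\mu_r\delta_r^-/\sigma_r^2$ times the height, i.e.\ $\theta^\star\,B_{\max}^r/2=\mu_r\delta_r^- B_{\max}^r/\sigma_r^2$, matching the claim; the matching lower bound needed for $\Theta(\cdot)$ follows from the tightness of the large-deviation rate, as in \cite{Mohit_TWCJune16_submitted}.

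For the throughput gap, I would compute $\mathcal{T}^c$ by a renewal-reward argument over the receiver's attempt cycles. In steady state the receiver's energy balance forces the mean cycle length to equal $N_r=R/\mu_r$, so the stationary fraction $q$ of short ($N_r^+$) cycles satisfies $qN_r^++(1-q)N_r^-=N_r$. Because the super-capacitor releases the whole accumulation $\sum p^u$ over a cycle, the transmit energy in a length-$\ell$ cycle is $\ell\mu_t$ up to the $\delta_t=\Theta(\log B_{\max}^t/B_{\max}^t)$ corrections and the exponentially rare discharge slots, so
\[
\mathcal{T}^c=\tfrac{1}{N_r}\bigl[q\log(1+N_r^+\mu_t)+(1-q)\log(1+N_r^-\mu_t)\bigr]-O\!\left(\tfrac{\log B_{\max}^t}{B_{\max}^t}\right).
\]
Comparing with the bound $\tfrac{1}{N_r}\log(1+N_r\mu_t)$ of Lemma~\ref{lem:throughput_upperbound}(b) and using concavity of the logarithm, the gap is nonnegative and at most $\tfrac{1}{N_r}\bigl[\log(1+N_r\mu_t)-\log(1+N_r^+\mu_t)\bigr]=\tfrac{1}{N_r}\log\tfrac{1+N_r\mu_t}{1+N_r^+\mu_t}$. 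Since $N_r-N_r^+<1\le N_r^+$ in Case~(b) (where $N_r>1$), the ratio stays below $2$, so the bracket is below $\log 2=1$ bit, which is precisely the stated within-one-bit guarantee (read as an upper bound on the gap up to the $O(\log B_{\max}^t/B_{\max}^t)$ term).

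The hard part will be the error control in the last step: I must show that the slots in which the transmitter battery lies below $B_{\max}^t/2$, where $p^u$ uses $\mu_t-\delta_t^-$ rather than $\mu_t+\delta_t^+$, together with the rare discharge slots, contribute only $O(\log B_{\max}^t/B_{\max}^t)$ to the time-average. This requires the ergodic-average and $\Theta\bigl((\log B_{\max}^t/B_{\max}^t)^2\bigr)$ utility-convergence estimates of \cite{rahulsri_IEEEACM_netw_aug2013} together with the concentration of the transmitter battery near the half-full level established in the first step. A secondary subtlety is that the two batteries are coupled through the common schedule $\{N_r^{\pm}\}$ even though their harvests are independent, so I would need to justify that the stationary fraction $q$ and the renewal-reward limit are well defined; the positive recurrence shown in the receiver analysis is what makes this rigorous.
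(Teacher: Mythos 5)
Your proposal is correct in substance, and for the throughput claim it takes a genuinely different route from the paper. On the discharge probabilities the two arguments coincide: both rest on the observation that the super-capacitor makes the transmitter's battery recursion pathwise identical to that under $\mathcal{P}^u$, and that the receiver's wake-up rule depends only on its own battery, so the two batteries evolve independently; the paper then simply invokes \cite[Lem.~2]{rahulsri_IEEEACM_netw_aug2013} for both nodes, whereas you re-derive the receiver exponent via a Lundberg/Cram\'er excursion bound---a legitimate, if redundant, substitute for the citation. The real divergence is in the throughput analysis. The paper constructs a Markov reward chain $\mathcal{M}$ whose states are $N_r^{\pm}$-length sequences of battery pairs, Taylor-expands the per-attempt rate $\mathcal{R}_s$ around $\mathcal{R}(N_r\mu_t)$ as in \eqref{eq:time_average_utility_simp}, and kills the first-order terms with the energy-conservation identity \eqref{eq:energy_consv}--\eqref{eq:Pi_delta}; the one-bit slack is then attributed to the constant-order drift terms $O(\delta_r^{\pm})$ that do not vanish with battery size. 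You instead run a renewal-reward argument over attempt cycles: energy balance pins the mean cycle length at $N_r$, Jensen's inequality gives nonnegativity of the gap, and monotonicity together with the rounding inequality $N_r-N_r^+<1\le N_r^+$ (valid in Case (b), where $N_r\ge 1$) yields the explicit bound $\frac{1}{N_r}\log\frac{1+N_r\mu_t}{1+N_r^+\mu_t}<\log 2$. Your route is more elementary, avoids the Taylor and state-sequence machinery entirely, and is in fact sharper: it shows the gap is at most $(\log 2)/N_r$ plus $O(\log B_{\max}^t/B_{\max}^t)$, not merely one bit, and it exposes the bit as a pure integer-rounding artifact---precisely the structural fact that the time-dilation construction of Section~\ref{sec:time_dilate} exploits. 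What the paper's heavier formulation buys is the term-by-term drift accounting that carries over verbatim to the time-dilated policy, and an explicit stationary distribution over state sequences. Two small remarks on your sketch: the ``hard part'' you flag is easier than you fear, since below the half-full mark the policy spends $\mu_t-\delta_t^-$, which already differs from $\mu_t$ by only $O(\log B_{\max}^t/B_{\max}^t)$, so no concentration around the half-full level is needed---only the discharge-probability bound; and your one-sided reading of the final claim (gap at most $1+O(\cdot)$) matches the paper's intent, with your positive-recurrence justification of the renewal-reward limit playing exactly the role of the paper's appeal to the finiteness of~$\mathcal{M}$.
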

\begin{proof}
	See Appendix~\ref{Sec:proof_policy_perf_guarantee_uncons}.
\end{proof}

A careful examination of the proof of Lemma~\ref{lem:policy_perf_guarantee_uncons} reveals that the one bit gap in the throughput arises because of the receiver's policy. From \eqref{eq:opt_Policy_cons_rx}, the receiver's policy is to wake up once in $N_r^+$ slots if its battery is more than half full, and to wake up once in $N_r^-$ slots if its battery is less than half full.  Due to this, the drift in the receiver's battery remains fixed at $\delta_r^- = N_r^- - N_r$ when $B_n^r < B_{\max}^r/2$ and $\delta_r^+ \triangleq N_r - N_r^+$ when $B_n^r \ge B_{\max}^r/2$, irrespective of the value of $B_{\max}^r$. In order to close the gap, we need finer control over the battery drift at the receiver. We need it to be of the order \textcolor{black}{$o( 1/B_{\max}^r)$}, similar to that at the transmitter. This can be achieved using  time-dilation, as described next. 

\section{Optimal Throughput via Time-dilation} \label{sec:time_dilate}

The key idea behind time dilation is to spread the drift $\delta_r^+$ and $\delta_r^-$ \textcolor{black}{at the receiver} over a larger number of slots, resulting in a smaller per-slot drift. That is, instead of \eqref{eq:opt_Policy_cons_rx}, which operates in batches of $\lfloor\frac{R}{\mu_r}\rfloor$ or $\lceil\frac{R}{\mu_r}\rceil$ slots, we consider a policy that operates in batches of $\lfloor\frac{Rf(B_{\max}^r)}{\mu_r}\rfloor$ and $\lceil\frac{Rf(B_{\max}^r)}{\mu_r}\rceil$ slots, where $f(\cdot) > 1$ is a time-dilation function. For example, if $f(B_{\max}^r)$ is an integer, the time dilated policy turns the receiver on for $f(B_{\max}^r)$ slots out of $\lfloor\frac{Rf(B_{\max}^r)}{\mu_r}\rfloor$ slots if the battery at the receiver is more than half full, and it turns the receiver on for $f(B_{\max}^r)$ slots out of $\lceil\frac{Rf(B_{\max}^r)}{\mu_r}\rceil$ if the battery is less than half full. This results in a drift of
\begin{equation}
\delta_{r,f}^+(B_{\max}^r)=f(B_{\max}^r)N_r-\left\lfloor\frac{Rf(B_{\max}^r)}{\mu_r}\right\rfloor
\end{equation}
\begin{equation}
\delta_{r,f}^-(B_{\max}^r)=\left\lceil\frac{Rf(B_{\max}^r)}{\mu_r}\right\rceil-f(B_{\max}^r)N_r
\end{equation}
over $\lfloor\frac{Rf(B_{\max}^r)}{\mu_r}\rfloor$ and $\lceil\frac{Rf(B_{\max}^r)}{\mu_r}\rceil$ slots, respectively. Hence, the per-slot drift is given by 
\be 
\delta_{\text{eff}}^+=\frac{\delta_{r,f}^+(B_{\max}^r)}{\lfloor\frac{Rf(B_{\max}^r)}{\mu_r}\rfloor} \text{\hspace{5pt} and \hspace{5pt}} \delta_{\text{eff}}^-=\frac{\delta_{r,f}^+(B_{\max}^r)}{\lceil\frac{Rf(B_{\max}^r)}{\mu_r}\rceil}.
\ee

\textcolor{black}{The transmit policy is still determined according to \eqref{eq:opt_Policy_cons_tx}. Furthermore, with the help of the one bit feedback, the transmitter can ensure that it transmits only in the $f(B_{\max}^r)$ slots when the receiver is `on'.} It can be shown that the dynamics of the policy under time dilation is similar to the dynamics of the policy $\mathcal{P}^c$, as long as the dilation function $f(B_{\max}^r)$ is sub-linear in $B_{\max}^r$. As a consequence, the proof of Lemma~\ref{lem:policy_perf_guarantee_uncons} can be extended to the time dilated policy as well, and the gap from the upper bound in this case can be made to approach zero as the battery size at the receiver gets large. Also, the policy operates over a longer time-window, has a smaller per-slot drift \textcolor{black}{and lower rate of crossing the half-full mark, resulting in a smaller feedback overhead.} We omit the details here due to lack of space. 


In the next section, we propose a near-optimal policy which operates without any feedback from the receiver, and yet achieves a throughput close to the policy~$\mathcal{P}^c$.

\section{A Policy for Fully Uncoordinated Links}
In this section, we propose an uncoordinated policy which prescribes a deterministic pattern for the receiver to turn on, and does not require any feedback from the receiver. At the transmitter, the  policy  $\mathcal{P}^{uc}$ follows the same strategy as $\mathcal{P}^c_t$ given by \eqref{eq:opt_Policy_cons_tx}. However, the indicator variable $\mathbbm{1}_{\mathcal{R}^+}$ is not available at the transmitter. Hence, it keeps the \emph{frequency} with which it transmits after $N_r^+$ and $N_r^-$ slots the same as for policy $\mathcal{P}^c$, but executes it in a deterministic pattern. The receiver also turns on in the same deterministic pattern, provided it has the energy to do so. To derive the deterministic pattern according to which the receiver turns on for the policy $\mathcal{P}^{uc}$, we first compute the empirical distribution of the battery states at the receiver in which it turns on after $N_r^+$ \textcolor{black}{slots}, denoted as $\pi_r^+$, under the policy $\mathcal{P}^c$. Then, starting from the first slot, under the policy $\mathcal{P}^{uc}$, the receiver turns on  after $N_r^+$ and $N_r^-$ slots in the same ratio as the policy $\mathcal{P}^c$. That is:
%
	\begin{itemize}
		\item We compute $\frac{n^+}{n^-}=\frac{\sum_{n=1}^N\mathbbm{1}_{\{B_n^t\geq B_{\max}^r\}}}{\sum_{n=1}^N \mathbbm{1}_{\{B_n^t< B_{\max}^r\}}}$, for policy $\mathcal{P}^c$.
		\item The receiver turns on at the last slot of every batch of $N_r^+$ slots for $n^+$ consecutive batches, after that it turns on at the last slot of every batch of $N_r^-$ slots for $n^-$ consecutive batches, and so on.
	\end{itemize}

Note that, in the above, $n^+$ and $n^-$ are integers, which can result in an approximation of the stationary probabilities with which the receiver turns on after $N_r^+$ and $N_r^-$ slots. Using larger integers results in a smaller approximation error, leading to the same empirical distribution in the battery states as for policy $\mathcal{P}^c$. This, in turn, results in the two policies attaining roughly the same average throughput. On the other hand, if $n^+$ and $n^-$ are large, the receiver is essentially executing a policy with a negative and positive drift (respectively) for a large number of consecutive slots, which could increase the battery \textcolor{black}{discharge/overflow} probability, leading to a loss of throughput.


The following Lemma characterizes the difference between the throughput achieved by the policy $\mathcal{P}^c$ and $\mathcal{P}^{uc}$, in terms of battery discharge probability of policy $\mathcal{P}^{uc}$.  
\begin{lem}
	The throughput achieved by the policy $\mathcal{P}^{uc}$, denoted by $\mathcal{T}^{uc}$, satisfies
\begin{equation} 
\mathcal{T}^{c}-\mathcal{T}^{uc}=O(\pi_0^{uc}),
\end{equation}
where $\pi_0^{uc}$ denotes the stationary probability that battery at the transmitter or receiver (or both) is empty, while operating under policy $\mathcal{P}^{uc}$.
	\label{lem:diif_policy_uc_policy_c}
\end{lem}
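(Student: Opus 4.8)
The plan is to write the throughput gap as a time-average over slots and show that the integrand vanishes on every slot that is \emph{not} afflicted by an empty battery, after which a uniform per-slot bound converts the remaining slots into the factor $\pi_0^{uc}$. I would start from the fact that the two policies share the identical transmitter strategy $\mathcal{P}_t^c$ of \eqref{eq:opt_Policy_cons_tx} (both accumulate the per-slot budget $p^u(n)$ in the super capacitor $\mathcal{C}_e(n)$ and discharge it on the scheduled transmission slot), and differ only in how the receiver chooses between an $N_r^+$-batch and an $N_r^-$-batch: $\mathcal{P}^c$ uses the real-time indicator $\mathbbm{1}_{\mathcal{R}^+}$, whereas $\mathcal{P}^{uc}$ replays a precomputed deterministic pattern whose long-run ratio $n^+/n^-$ is matched to the empirical ratio under $\mathcal{P}^c$. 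Writing $\mathbbm{1}_{\{\text{succ}\}}(n)$ for the event that both nodes are simultaneously on \emph{and} energized, the definition \eqref{eq:time_average_throughput} gives
\[
\mathcal{T}^c-\mathcal{T}^{uc}=\lim_{N\to\infty}\frac{1}{N}\sum_{n=1}^N\Big(\mathbbm{1}_{\{\text{succ}\}}^c(n)\log(1+p_t^c(n))-\mathbbm{1}_{\{\text{succ}\}}^{uc}(n)\log(1+p_t^{uc}(n))\Big).
\]

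The second step is to argue that on every slot in which neither battery empties the two summands coincide. Here I would lean on the matched batch-frequency construction of $\mathcal{P}^{uc}$: reproducing the fraction of $N_r^+$- and $N_r^-$-batches of $\mathcal{P}^c$ forces the two policies to induce the same long-run empirical distribution of the joint battery state, hence the same multiset of super-capacitor discharges; since the time-average in \eqref{eq:time_average_throughput} is invariant to the order in which these discharges occur, the non-empty-battery contributions to $\mathcal{T}^c$ and $\mathcal{T}^{uc}$ are equal. Consequently the displayed difference is supported entirely on the slots where the transmitter battery, the receiver battery, or both are empty, and these occur with long-run frequency $\pi_0^{uc}$ under $\mathcal{P}^{uc}$ (and no more often under the strictly more adaptive $\mathcal{P}^c$, so the difference is also nonnegative up to lower-order terms).

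The third step is a uniform per-slot bound. Because the batch lengths $N_r^+=\lfloor R/\mu_r\rfloor$ and $N_r^-=\lceil R/\mu_r\rceil$ are constants independent of the battery size and $\delta_t^+\to 0$, the energy drained from the super capacitor on any single transmission is at most $N_r^-(\mu_t+\delta_t^+)\le C_0$ for a constant $C_0$ and all large $B_{\max}^t$; hence $\log(1+p_t(n))\le\log(1+C_0)=:C$ uniformly over $n$. Charging each afflicted slot at most $C$ then yields $\mathcal{T}^c-\mathcal{T}^{uc}\le C\,\pi_0^{uc}=O(\pi_0^{uc})$, which is the claim.

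I expect the main obstacle to be the matching step, namely rigorously establishing that the deterministic replay pattern produces the same stationary joint battery-state distribution (and hence the same order-independent collection of transmit powers) as the adaptive policy $\mathcal{P}^c$. The integer rounding of $n^+$ and $n^-$ leaves a residual drift mismatch, and long consecutive runs of a single batch type perturb the battery occupancy. I would control these effects by invoking ergodicity of the harvesting process together with the same sub-linear-in-$B_{\max}^r$ scaling used in Lemma~\ref{lem:policy_perf_guarantee_uncons}, showing that the discrepancy they introduce in the time-averaged throughput is itself $O(\pi_0^{uc})$ and can therefore be absorbed into the stated bound.
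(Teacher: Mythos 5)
Your proposal follows essentially the same route as the paper's proof: the paper likewise exploits that both policies earn the identical per-state rate $\mathcal{R}_s$, reduces $\mathcal{T}^c-\mathcal{T}^{uc}$ to a weighted difference of state-occupancy measures (stationary distributions $\pi_s-\pi_s^{uc}$ of a Markov chain $\mathcal{M}'$, which by ergodicity is the same object as your slot-wise empirical averages), and bounds this by the maximum per-slot rate $\mathcal{R}_{\max}^{N_r^-}/N_r^+$ times the empty-battery probability, giving $\mathcal{T}^c-\mathcal{T}^{uc}<\frac{\mathcal{R}_{\max}^{N_r^-}}{N_r^+}\pi_0^{uc}$. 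The ``matching'' obstacle you flag as the main difficulty is treated in the paper at essentially the same level of rigor---it simply asserts, via the fact that both stationary distributions sum to one, that the occupancy discrepancy concentrates on the zero-rate (empty-battery) states---so your argument is faithful to the published one.
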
 
\begin{proof}
	See Appendix~\ref{app:proof_lemma_diff_policy_uc_vs_c}.
\end{proof}
The significance of the above result comes from the fact that the battery discharge probability, $\pi_0^{uc}$, can be made to decrease rapidly with the battery size, for a well designed policy. Due to this, the gap between the throughput achieved by $\mathcal{P}^{uc}$ and $\mathcal{P}^c$ can be made small.


\section{Simulation Results}
\label{sec:sim}
We evaluate the performance of the proposed policies by evaluating the time-averaged throughput using Monte Carlo simulations of the system over $10^7$ slots. The harvesting processes at the transmitter and receiver are assumed to be spatially and temporally independent and identically distributed according to the Bernoulli distribution with harvesting probabilities $\rho_t$ and $\rho_r$, respectively. 

Fig.~\ref{fig:energy_uncons} shows the average per slot throughput  when the receiver is energy unconstrained. We note that the policy given in \eqref{eq:opt_policy_uncons_scen} achieves the upper bound derived in Lemma~\ref{lem:throughput_upperbound}. In this case, the harvesting rate at the transmitter completely determines the average throughput performance. 

In Fig.~\ref{fig:energy_cons}, we show the average per slot throughput when the receiver is energy constrained. The performance of policy $\mathcal{P}^c$ given in \eqref{eq:opt_Policy_cons_tx} and \eqref{eq:opt_Policy_cons_rx}, which requires an occasional one bit feedback, is benchmarked against the upper bound. 
We see that the throughput of $\mathcal{P}^c$ is very close to the upper bound. The figure also shows the the time-dilated policy discussed in Sec.~\ref{sec:time_dilate} further closes the gap to the upper bound. In Fig.~\ref{fig:energy_cons_battry_size}, we study the impact of the battery size at the two nodes on the performance of the policy $\mathcal{P}^c$ for a system with an energy constrained receiver. The per slot throughput achieved by the policy is near-optimal even with small capacity batteries.

Finally, in Fig.~\ref{fig:energy_cons_fullyuncoord}, we compare the performance of the policy $\mathcal{P}^{uc}$ against the throughput of the policy $\mathcal{P}^c$. We note that the  throughput achieved by $\mathcal{P}^{uc}$ is only marginally lower than that achieved by $\mathcal{P}^c$. Thus, the price paid for fully uncoordinated operation is quite small.
\begin{figure}[t!]
\begin{center}
\includegraphics[width=3.3in]{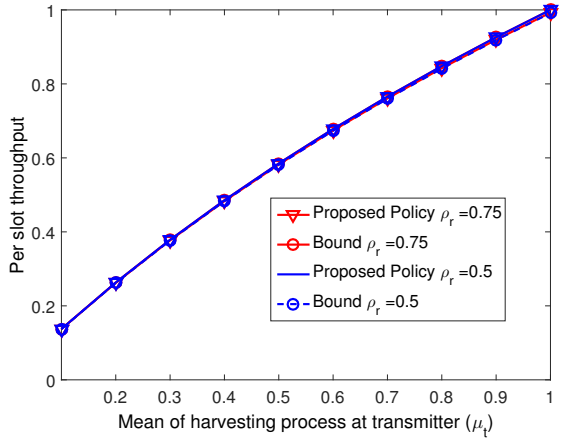}
\caption{Energy unconstrained receiver: The policy presented in \eqref{eq:opt_policy_uncons_scen} achieves the bound. Parameters chosen are $R=0.5$ and $B_{\max}^t=B_{\max}^r=50$. }
\label{fig:energy_uncons}
\end{center}
\end{figure}
\begin{figure}[t!]
	\begin{center}		\includegraphics[width=3.3in]{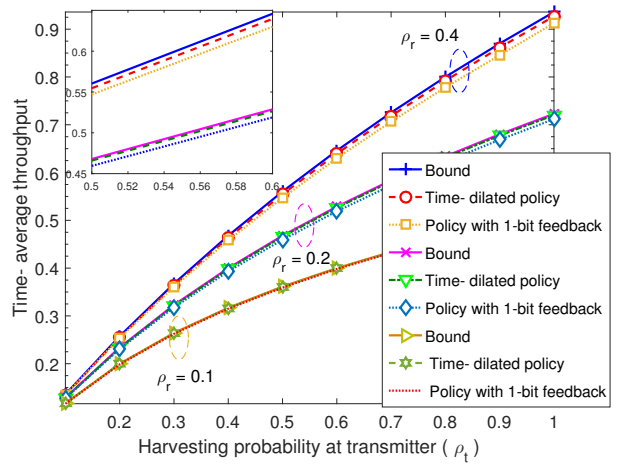}
\caption{Energy constrained receiver: The policy $\mathcal{P}^c$ with occasional one bit feedback,  
achieves a throughput close to upper bound. The time-dilation further improves its performance. The result corresponds to time-dilation $f(\cdot)=100$. Other parameters are $R=0.5$ and $B_{\max}^t=B_{\max}^r=1000$. }
		\label{fig:energy_cons}
	\end{center}
\end{figure}
\begin{figure}[t!]
\begin{center}
\includegraphics[width=3.3in]{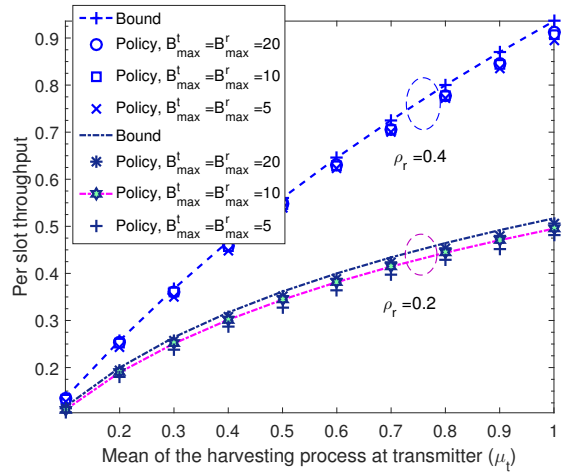}
\caption{Impact of battery size on the throughput of policy $\mathcal{P}^c$, for $R=0.5$. }
\label{fig:energy_cons_battry_size}
\end{center}
\end{figure}
\begin{figure}[h!]
	\begin{center}		\includegraphics[width=3.3in]{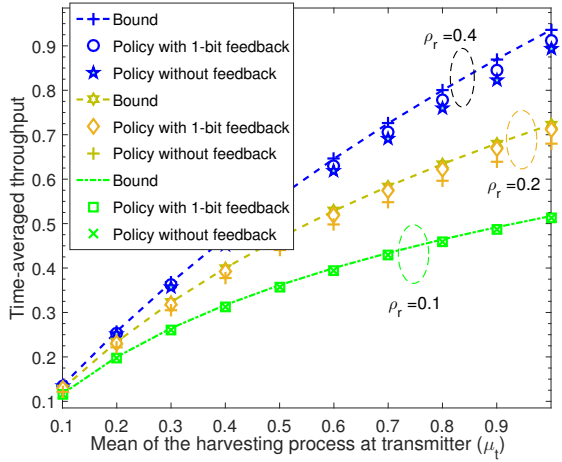}
		\caption{The fully uncoordinated policy $\mathcal{P}^{uc}$ achieves a throughput close to that of the policy $\mathcal{P}^c$. For $\mathcal{P}^{uc}$, the values of $(n^+,n^-)$ are $(5,1)$, $(1,1)$ and $(2,1)$ for $\rho_r = 0.1, 0.2$ and $0.4$, respectively.  Other parameters:  $B_{\max}^t=B_{\max}^r=50$,  $R=0.5$.  }		\label{fig:energy_cons_fullyuncoord}
	\end{center}
\end{figure}
\section{Conclusions}
\label{sec:conclude}
In this paper, we considered the problem of designing power control policies for uncoordinated dual EH links, where both the transmitter and receiver are unaware of the battery state of their counterparts. First, we derived an upper bound on the achievable throughput with the help of a genie-aided system, which has noncausal knowledge of the energy arrivals. Then, we considered a scenario where the receiver is energy unconstrained, and presented a policy which achieves the upper bound. Next, we considered the case of an energy constrained receiver, and presented a policy which achieves the upper bound asymptotically through time-dilation and requires occasional one bit feedback . We also presented a fully uncoordinated policy in which the nodes deterministically make their data transmission attempts, and empirically showed that it achieves near-optimal throughput \textcolor{black}{without requiring any feedback}. Future work could extend the proposed policies to fading channels and spatially correlated harvesting processes. 

\appendix
\subsection{Proof of Lemma~\ref{lem:throughput_upperbound}}
\label{App:proof_upper_bound}
\begin{proof}[Case a)]
From \eqref{eq:time_average_throughput}, the time-averaged throughput of a dual EH link can be upper bounded as 
\begin{align*}
\mathcal{T} &=\frac{1}{N}\sum_{n=1}^{N}\mathbbm{1}_{\{p_r(n)\neq 0\}}\log(1+p_t(n)),\\
&\leq\frac{1}{N}\sum_{n=1}^{N}\log(1+p_t(n)) {\leq} \log\left(1+\frac{1}{N}\sum_{n=1}^{N}p_t(n)\right).
\end{align*} 
The last inequality above follows from Jensen's inequality. Next, taking the limit $N \rightarrow \infty$, we get
\begin{align*}
\liminf_{N\to\infty}\mathcal{T}&\leq\liminf_{N\to\infty}\log\left(1+\frac{1}{N}\sum_{n=1}^{N}p_t(n)\right),\\
&\overset{(c)}{=}\log\left(1+\liminf_{N\to\infty}\frac{1}{N}\sum_{n=1}^{N}p_t(n)\right),\\
&\overset{(d)}{\leq}\log\left(1+\liminf_{N\to\infty}\frac{1}{N}\sum_{n=1}^{N}\mathcal{E}_t(n)\right),\\
&\overset{(e)}{=}\log(1+\mu_t),
\end{align*}
where (c) follows because the logarithm is a continuous function, (d) follows from the \textcolor{black}{fact that the total energy consumed can not exceed the total energy harvested}, and (e) follows from the ergodicity of the harvesting process. This completes the proof in Case (a). 
\end{proof}
\begin{proof}[Case b)]
In this scenario, the receiver can only turn on intermittently, and the lack of information about the battery state of the other node can lead to energy loss at the nodes. To obtain an upper bound, we consider a genie-aided system where the transmitter and receiver are equipped with infinite sized batteries, and \textcolor{black}{the entire energy harvested over $N$ slots is made available in the first slot itself, at both the nodes. In this case}, there is no energy loss due to lack of coordination, as both the transmitter and receiver know the number of slots when the receiver can turn on. Hence, the throughput of this system is an upper bound on \eqref{eq:Opt_problem_obj}. 

From the strong law of large numbers, for large $N$, the energy at the transmitter and the receiver at the beginning of communication is $N\mu_t$ and $N\mu_r$, respectively. Thus, the total number of slots the receiver can remain on is $N'=\lfloor\frac{N\mu_r}{R}\rfloor$. The long-term time-averaged throughput, $\mathcal{T}_g$, of this genie-aided system is
\begin{align}
\liminf_{N\to\infty}\mathcal{T}_g & \leq\liminf_{N\to\infty}\frac{1}{N}\sum_{n=1}^{N'}\log(1+p_t(n)), \nonumber \\
&\leq \liminf_{N\to\infty}\frac{N'}{N}\log\left(1+\frac{N\mu_t}{N'}\right).
\end{align}
The last inequality above is based on the fact that it is optimal to equally allocate the energy available over the $N'$ slots, since the logarithm is a concave function. Noting that $\lim_{N\rightarrow\infty} \frac{N'}{N} = \frac{\mu_r}{R}$ completes the proof.
\end{proof}

\subsection{Proof of Lemma~\ref{lem:policy_perf_guarantee_uncons}}
\label{Sec:proof_policy_perf_guarantee_uncons}
\begin{proof}
	The proof of this Lemma is adapted from \cite{rahulsri_IEEEACM_netw_aug2013}. At the transmitter, we choose $\delta_t^+=\delta_t^-=\beta_t\sigma_t^2\frac{\log B_{\max}^t}{B_{\max}^t}$. On the other hand, at the receiver, $\delta_r^+=N_r-\lfloor\frac{R}{\mu_r}\rfloor$, and $\delta_r^-=\lceil\frac{R}{\mu_r}\rceil-N_r$, where $N_r=\frac{R}{\mu_r}$. First, we analyze the battery discharge probability at the transmitter and  receiver, denoted by $p_d^t$ and $p_d^r$, respectively. We use  \cite[Lem. 2]{rahulsri_IEEEACM_netw_aug2013}, which was derived for the case where only the transmitter is an EHN. 
	
Recall that the energy transferred from the battery to the super capacitor at the transmitter depends  on the battery state at the transmitter, while the feedback sent by the receiver only determines the slot in which the data is transmitted, using the energy accumulated in the super capacitor. \textcolor{black}{Similarly, the decision to turn on at the receiver depends only on the state of the battery at the receiver. Thus, the  batteries at both the transmitter and receiver evolve independently of each other. Hence, the result in \cite[Lem. 2]{rahulsri_IEEEACM_netw_aug2013} is  applicable to our case where both the transmitter and receiver are EHNs. Thus, the battery discharge probabilities at the transmitter and receiver decay as $\Theta\left(\exp\left(-\frac{B_{\max}^t\delta_t^-}{\sigma_t^2}\right)\right)$ and $\Theta\left(\exp\left(-\frac{B_{\max}^r\mu_r\delta_r^-}{\sigma_r^2}\right)\right)$, respectively. Since $\delta_t^+=\delta_t^-=\beta_t\sigma_t^2\frac{\log B_{\max}^t}{B_{\max}^t}$, $p_d^t=\Theta\left({B_{\max}^t}^{-\beta_t}\right), \beta_t > 0$.} Similar results hold  for the battery overflow probabilities also.  
	
	Next, to show that the policy $\mathcal{P}^c$ asymptotically achieves within one bit of the upper bound, we first characterize the rate obtained in a slot. Under policy $\mathcal{P}^c$, the receiver turns on after $N_r^+$ and $N_r^-$ slots, depending on the battery state at the receiver. Thus, the rate obtained in a slot ranges between $\mathcal{R}_{\max}^{N_r^-}\triangleq \mathcal{R}\left(N_r^-(\mu_t+\delta_t^+)\right)$  and $\mathcal{R}_{\min}^{N_r^+}\triangleq\mathcal{R}\left(N_r^+(\mu_t-\delta_t^-)\right)$. Here, $\mathcal{R}_{\max}^{N_r^-}$ denotes the maximum rate obtained in a slot, which is achieved when receiver turns on after $N_r^-$ slots and the battery at the transmitter remains more than half full during all the $N_r^-$ slots. Similarly, $\mathcal{R}_{\min}^{N_r^+}$ is the minimum rate obtained in a slot, which is achieved when the receiver turns on after $N_r^+$ slots and the battery at the transmitter is less than half full for the entire duration of $N_r^+$ slots.
	Since $\mathcal{R}$ is an analytic function, using Taylor's expansion, we can write 
	\begin{multline}
	\mathcal{R}_{\max}^{N_r^-}=\mathcal{R}\left[(N_r+\delta_r^-)(\mu_t+\delta_t^+)\right]\\
	=\mathcal{R}(N_r\mu_t)+\mathcal{R}^{(1)}(N_r\mu_t)\delta_{\max}+\mathcal{R}^{(2)}(N_r\mu_t)\delta_{\max}^2+o(\delta_{\max}^2)\nonumber
	\end{multline}
	where  $\delta_{\max}\triangleq\delta_r^-(\mu_t+\delta_t^+)+N_r\delta_t^+$. Similarly, 
	\begin{align}
	\mathcal{R}_{\min}^{N_r^+}&=\mathcal{R}(N_r\mu_t)+\mathcal{R}^{(1)}(N_r\mu_t)\delta_{\min}+\mathcal{R}^{(2)}(N_r\mu_t)\delta_{\min}^2\nonumber\\
&\qquad\qquad\qquad+o(\delta_{\min}^2) \nonumber
	\end{align}
	where $\delta_{\min}\triangleq\delta_r^+\delta_t^- -N_r\delta_t^- -\mu_t\delta_r^+$.
	 Now, the actual rate achieved depends by the amount of energy used for transmission, which, in turn, depends on the number of slots since the previous transmission attempt. It also depends on the sequence of states the batteries at the two nodes go through, starting from the slot the transmitter previously made an attempt. Hence, the transmit power corresponding to an arbitrary state sequence $s$ can be written as
	 \be
	 p_s=\begin{cases}
	 	N_r^-\mu_t+k_s\delta_t^+-(N_r^--k_s)\delta_t^-,  & \text{ if } s\in\mathcal{S}_{N_r^-},\\
	 	N_r^+\mu_t+\ell_s\delta_t^+-(N_r^--\ell_s)\delta_t^-,  & \text{ if } s\in\mathcal{S}_{N_r^+},
	 \end{cases}
	 \ee
	 where $0\leq k_s\leq N_r^-$ and $0\leq \ell_s\leq N_r^-$ denote the number of slots when the battery at the transmitter is more than half full, when the communication happens in $N_r^-$ and $N_r^+$ slots, respectively. Also, $\mathcal{S}_{N_r^-}$ and $\mathcal{S}_{N_r^+}$ denote the set of sequence of states in which the receiver turns on after  $N_r^-$ and $N_r^+$ slots, respectively. 
	 
	 The total number of bits transmitted corresponding to an arbitrary state sequence $s$ in which the transmit energy is $N_r\mu_t+\delta_s$, is written as 
	 \begin{equation}
	 \mathcal{R}_{s}=\mathcal{R}(N_r\mu_t)+\mathcal{R}^{(1)}(N_r\mu_t)\delta_s+\mathcal{R}^{(2)}(N_r\mu_t)\delta_s^2+o(\delta_s)^2.
	 \label{eq:Us_tylor_expan}
	 \end{equation}
	 In the above, since $N_r^-=N_r+\delta_r^-$ $N_r^+=N_r-\delta_r^+$ and $\delta_t^+=\delta_t^-$, $\delta_s$ (by comparing $p_s$ with $N_r\mu_t+\delta_s$) is given as
	 
	 \be
	 \delta_s=\begin{cases}
	 	\mu_t\delta_r^--N_r^-\delta_t^-+2k_s\delta_t  & \text{ if } s\in\mathcal{S}_{N_r^-}\\
	 	-N_r^+\delta_t^--\mu_t\delta_r^++2\ell_s\delta_t  & \text{ if } s\in\mathcal{S}_{N_r^+}.
	 \end{cases}
	 \ee

	The rates obtained for policy $\mathcal{P}^c$ can also be characterized in terms of  the Markov chain described in the following. In terms of Markov reward process, the rate $\mathcal{R}_s$ can be viewed as the reward obtained when  \textcolor{black}{the Markov chain $\mathcal{M}$} visits the state $s$. The state of Markov chain is given by a set of tuples of battery states at the transmitter and the receiver. Depending on the length of the sequence of tuples of the battery states, the state space of the Markov chain can be partitioned into two disjoint subsets, containing $N_r^+$ and $N_r^-$ length sequences of battery states, denoted by $\mathcal{S}_{N_r^+}$ and $\mathcal{S}_{N_r^-}$, respectively. A typical state $s\in \mathcal{S}_{N_r^+}$ is denoted as $\{(B_m^t,B_m^r)\}_{m=1}^{N_r^+}$. The transition probabilities of this Markov chain can be written in terms of the transition probabilities of the Markov chains describing the evolution of the battery at the transmitter and receiver, given by \eqref{eq:Tax_battery_evol}.  For instance, in a scenario where the transmitter and the receiver harvest the energy according to a Bernoulli process\cite{MSharma_JSAC_Dec2016}, the probability of making a transition from an arbitrary state $s\in \mathcal{S}_{N_r^+}$ to a state $s'\in \mathcal{S}_{N_r^+}$,  in which the reward obtained is $\mathcal{R}_{\min}^{N_r^+}$, can be written as follows. The probability of transition from $s$ to $s'$ is one, if the battery at the transmitter and receiver in the last tuple of the state $s$ is such that $B_{N_r^+}^t< \frac{B_{\max}^r}{2}-N_r^+$ and $B_{N_r^+}^r\geq \frac{B_{\max}^r}{2}+R$,  and for state $s'$, $B_m^t< \frac{B_{\max}^t}{2}$ as well as $B_m^r>\frac{B_{\max}^r}{2}$ for all $1\leq m\leq N_r^+$; otherwise it is zero. Note that, the rate $\mathcal{R}_{\min}^{N_r^+}$ is the reward corresponding to the state which is given by the set of tuples of battery states in which the battery at the transmitter is always less than half full while the battery at the receiver is more than half full.

\textcolor{black}{Under the above Markov chain formulation,} the time-averaged throughput is
\begin{equation}
\mathcal{T}^c=\sum_{s\in\mathcal{S}}\pi_s\frac{\mathcal{R}_s}{N_s},
\label{eq:time_average_utility}
\end{equation} 
where $\pi_s$ denotes the steady state probability of the system being in a state $s$ such that the rate $\mathcal{R}_s$ is obtained in $N$ slots. Note that, the existence of the steady state distribution is ensured by the fact that the Markov chain $\mathcal{M}$ has a finite number of states. Also, in the above, $N_s$ takes the value $N_r^+$ and $N_r^-$ depending on the state s. Next, using \eqref{eq:Us_tylor_expan}, time-averaged throughput in \eqref{eq:time_average_utility} can be rewritten as  
\begin{align}
\mathcal{T}^c&=\sum_{s\in\mathcal{S}_{N_r^-}}\pi_s\frac{\mathcal{R}_s}{N_r+\delta_r^+}+\sum_{s\in\mathcal{S}_{N_r^+}}\pi_s\frac{\mathcal{R}_s}{N_r-\delta_r^+},\\
&=\mathcal{R}(N_r\mu_t)\left[\frac{1}{N_r+\delta_r^+}\sum_{s\in\mathcal{S}_{N_r^-}}\pi_s+\frac{1}{N_r-\delta_r^+}\sum_{s\in\mathcal{S}_{N_r^+}}\pi_s\right]\nonumber\\
&+\mathcal{R}^{(1)}(N_r\mu_t)\left[\frac{1}{N_r^-}\sum_{s\in\mathcal{S}_{N_r^-}}\pi_s\delta_s+\frac{1}{N_r^+}\sum_{s\in\mathcal{S}_{N_r^+}}\pi_s\delta_s\right]\nonumber\\
&+\mathcal{R}^{(2)}(N_r\mu_t)\left[\frac{1}{N_r^-}\sum_{s\in\mathcal{S}_{N_r^-}}\pi_s\delta_s^2+\frac{1}{N_r^+}\sum_{s\in\mathcal{S}_{N_r^+}}\pi_s\delta_s^2\right]\nonumber\\
&\qquad+\sum_{s\in\mathcal{S}_{N_r^+}}\pi_so(\delta_s^2)+\sum_{s\in\mathcal{S}_{N_r^-}}\pi_so(\delta_s^2).
\label{eq:time_average_utility_simp}
\end{align} 
In the following, we study the \textcolor{black}{behavior of each of the terms in RHS of \eqref{eq:time_average_utility_simp}.} The first term in \eqref{eq:time_average_utility_simp} can we rewritten as 
\begin{align}
&\frac{\mathcal{R}(N_r\mu_t)}{N_r\left(1+\frac{\delta_r^-}{N_r}\right)\left(1-\frac{\delta_r^+}{N_r}\right)}\left[1+\frac{\delta_r^-}{N_r}\sum_{s\in\mathcal{S}_{N_r^+}}\pi_s-\frac{\delta_r^+}{N_r}\sum_{s\in\mathcal{S}_{N_r^-}}\pi_s\right]\nonumber\\
&\quad=\frac{\mathcal{R}(N_r\mu_t)}{N_r\left(1+\frac{\delta_r^-}{N_r}\right)\left(1-\frac{\delta_r^+}{N_r}\right)}\left[1+\frac{\delta_r^-\pi_r^+}{N_r}-\frac{\delta_r^+\pi_r^-}{N_r}\right],
\label{eq:first_term}
\end{align}
where $\pi_r^+\triangleq\sum_{s\in\mathcal{S}_{N_r^+}}\pi_s$ and $\pi_r^-\triangleq\sum_{s\in\mathcal{S}_{N_r^-}}\pi_s$ denote the stationary probability of being in a state such that the receiver turns on after $N_r^+$ and $N_r^-$ slots, respectively. Next, using the energy conservation principle at the receiver
\begin{equation}
\pi_r^+\left(\frac{R}{N_r-\delta_r^+}\right)+(\pi_r^--p_{d}^r)\left(\frac{R}{N_r+\delta_r^-}\right)=\mu_r(1-p_o^r),
\label{eq:energy_consv}
\end{equation}
where $p_d^r$ and $p_o^r$ denote the probability of battery discharge and overflow, respectively. Simplifying the above equation, and using the result for discharge and overflow probability, $p_d^r$ and $p_o^r$, \textcolor{black}{derived at the start of this section}, we get
\be
\pi_r^+\delta_r^--\pi_r^-\delta_r^+=O(\delta_r^-)
\label{eq:Pi_delta}
\ee

On the other hand, the second term in \eqref{eq:time_average_utility_simp} is written as 
\begin{align}
&\mathcal{R}^{(1)}(N_r\mu_t)\left[\frac{1}{N_r^-}\sum_{s\in\mathcal{S}_{N_r^-}}\pi_s\left(\mu_t\delta_r^--N_r^-\delta_t^-+2k_s\delta_t\right)\right.\nonumber\\
&\left.\qquad\qquad\qquad+\frac{1}{N_r^+}\sum_{s\in\mathcal{S}_{N_r^+}}\pi_s\left(-N_r^+\delta_t^--\mu_t\delta_r^++2\ell_s\delta_t\right)\right]\nonumber\\
&=\mathcal{R}^{(1)}(N_r\mu_t)\left(-\delta_t^-+\frac{\mu_t\delta_r^-\pi_r^-}{N_r^-}+\frac{\mu_t\delta_r^+\pi_r^+}{N_r^+}\right.\nonumber\\
&\hspace{80pt}\left.+\frac{2\delta_t}{N_r^-}\sum_{s\in\mathcal{S}_{N_r^-}}\pi_sk_s+\frac{2\delta_t}{N_r^+}\sum_{s\in\mathcal{S}_{N_r^+}}\pi_s\ell_s\right)\nonumber\\
&=\mathcal{R}^{(1)}(N_r\mu_t)\left(-\delta_t^--\frac{\mu_t\delta_r^+\delta_r^-}{(N_r+\delta_r^-)(N_r-\delta_r^+)}\right.\nonumber\\
&\left.\qquad\qquad+\frac{\mu_tN_r}{(N_r+\delta_r^-)(N_r-\delta_r^+)}[\pi_r^-\delta_r^--\pi_r^+\delta_r^+]+A\right),
\label{eq:second_last_eqn}
\end{align}
where $A\triangleq \frac{2\delta_t}{N_r^-}\sum_{s\in\mathcal{S}_{N_r^-}}\pi_sk_s+\frac{2\delta_t}{N_r^+}\sum_{s\in\mathcal{S}_{N_r^+}}\pi_s\ell_s$. 
Using \eqref{eq:Pi_delta}, we have
\be
\pi_r^-\delta_r^--\pi_r^+\delta_r^+=\delta_r^--\delta_r^++(\pi_r^-\delta_r^+-\pi_r^+\delta_r^-).
\label{eq:final_eq}
\ee

Note that the quantity in \eqref{eq:second_last_eqn} converges to zero as $O(\delta_r^+)+O(\delta_r^-)+O(\delta_t^+)$, and $\sum_{s\in\mathcal{S}_{N_r^+}}\pi_s o(\delta_s^2)$, $\sum_{s\in\mathcal{S}_{N_r^-}}\pi_so(\delta_s^2)$ and the last but one term in \eqref{eq:time_average_utility_simp}, goes to zero as $O({\delta_r^+}^2)+O({\delta_t^-}^2)$ and $O({\delta_r^-}^2)+O({\delta_t^-}^2)$, respectively. The proof completes by noting that the right-hand side in \eqref{eq:first_term} converges to $\frac{\mathcal{R}(\mu_tN_r)}{N_r}$, and $N_r=\frac{N}{\lfloor \frac{N\mu_r}{ R}\rfloor}$.
\end{proof}
\subsection{Proof of Lemma~\ref{lem:diif_policy_uc_policy_c}}
\label{app:proof_lemma_diff_policy_uc_vs_c}
\begin{proof}
	\textcolor{black}{To prove the result, we consider a Markov chain $\mathcal{M}'$ which has the same state space  as the Markov chain $\mathcal{M}$, described in the proof of Lemma~\ref{lem:policy_perf_guarantee_uncons}, and its transition probabilities are governed by the policy and the harvesting statistics at both the nodes. The average throughput achieved by the policy ${P}^{uc}$ can be written as $$\mathcal{T}^{uc}=\sum_{s\in\mathcal{S}}\pi_s^{uc}\frac{\mathcal{R}_s^{uc}}{N_s},$$ where $\pi_s^{uc}$ denote the stationary probability of Markov chain $\mathcal{M}'$ being in the state $s\in\mathcal{S}$. Thus, the difference between the time-average throughput achieved by two policies, using \eqref{eq:time_average_utility}, can be written as
		\begin{align*}
		\mathcal{T}^c-\mathcal{T}^{uc}&=\sum_{s\in\mathcal{S}}\pi_s\frac{\mathcal{R}_s}{N_s}-\sum_{s\in\mathcal{S}}\pi_s^{uc}\frac{\mathcal{R}_s^{uc}}{N_s},\\
		&\overset{(a)}{=}\sum_{s\in\mathcal{S}}\frac{\mathcal{R}_s}{N_s}\left(\pi_s-\pi_s^{uc}\right),\\
		&\overset{(b)}{<}\frac{\mathcal{R}_{\max}^{N_r^-}}{N_r^+}\left(\pi_0^{uc}-\pi_0\right) <\frac{\mathcal{R}_{\max}^{N_r^-}}{N_r^+}\pi_0^{uc},
		\end{align*}  
		where $(a)$ follows from the fact the rate obtained in state $s$ is the same for both $\mathcal{P}^c$ and $\mathcal{P}^{uc}$, and (b) uses the fact that the stationary distribution sums to one, and the maximum achieved rate is $\mathcal{R}_{\max}^{N_r^+}$. This completes the proof. }
\end{proof}
\bibliographystyle{IEEEtran}
{\bibliography{IEEEabrv,bibJournalList,references}}

\end{document}